\def\BibTeX{{\rm B\kern-.05em{\sc i\kern-.025em b}\kern-.08em
    T\kern-.1667em\lower.7ex\hbox{E}\kern-.125emX}}
\newtheorem{theorem}{Theorem}
\newtheorem{definition}{Definition}
\Crefname{definition}{Def.}{Defs.}%
\Crefname{algorithm}{Alg.}{Algs.}
\newcommand{\AlgNameLong}{Quadratic Continuous Quantum Optimization\xspace}
\newcommand{\AlgName}{QCQO\xspace}
\newcommand{\QUBO}{\textsc{Qubo}\xspace}
\newcommand{\beps}{\bm{\varepsilon}}
\newcommand{\ba}{\bm{a}}
\newcommand{\bb}{\bm{b}}
\newcommand{\bc}{\bm{c}}
\newcommand{\bA}{\bm{A}}
\newcommand{\bmu}{\bm{\mu}}
\newcommand{\bp}{\bm{p}}
\newcommand{\bP}{\bm{P}}
\newcommand{\bQ}{\bm{Q}}
\newcommand{\bR}{\bm{R}}
\newcommand{\bSigma}{\bm{\Sigma}}
\newcommand{\bu}{\bm{u}}
\newcommand{\bU}{\bm{U}}
\newcommand{\bw}{\bm{w}}
\newcommand{\bX}{\bm{X}}
\newcommand{\by}{\bm{y}}
\newcommand{\bz}{\bm{z}}
\newcommand{\bZ}{\bm{Z}}
\newcommand{\BB}{\mathbb{B}}
\newcommand{\BE}{\mathbb{E}}
\newcommand{\BN}{\mathbb{N}}
\newcommand{\BR}{\mathbb{R}}
\newcommand{\cL}{\mathcal{L}}
\newcommand{\cN}{\mathcal{N}}
\newcommand{\cR}{\mathcal{R}}
\newcommand{\cU}{\mathcal{U}}
\newcommand{\range}[1]{\lbrace 1,\dots,#1\rbrace}
\newcommand{\T}{\intercal}
\DeclareMathOperator{\diag}{diag}
\DeclareMathOperator{\MSE}{MSE}
\DeclareMathOperator{\vect}{vec}
\DeclarePairedDelimiter\ceil\lceil\rceil
\DeclarePairedDelimiter\norm\lVert\rVert
\DeclarePairedDelimiter\set\lbrace\rbrace
\begin{document}

\title{Quadratic Continuous Quantum Optimization
\thanks{This research has been funded by the Federal Ministry of Education and Research of Germany and the state of North Rhine-Westphalia as part of the Lamarr Institute for Machine Learning and Artificial Intelligence.}}

\author{\IEEEauthorblockN{Sascha M\"ucke}
\IEEEauthorblockA{\textit{Lamarr Institute} \\
\textit{TU Dortmund University}\\
Dortmund, Germany \\
sascha.muecke@tu-dortmund.de}
\and
\IEEEauthorblockN{Thore Gerlach}
\IEEEauthorblockA{\textit{Lamarr Institute} \\
\textit{University of Bonn}\\
Bonn, Germany\\
tgerlac1@uni-bonn.de}
\and
\IEEEauthorblockN{Nico Piatkowski}
\IEEEauthorblockA{\textit{Lamarr Institute} \\
\textit{Fraunhofer IAIS}\\
Sankt-Augustin, Germany \\
nico.piatkowski@iais.fraunhofer.de}}

\maketitle

\begin{abstract}
Quantum annealers can solve QUBO problems efficiently but struggle with continuous optimization tasks like regression due to their discrete nature.
We introduce Quadratic Continuous Quantum Optimization (QCQO), an anytime algorithm that approximates solutions to unconstrained quadratic programs via a sequence of QUBO instances.
Rather than encoding real variables as binary vectors, QCQO implicitly represents them using continuous QUBO weights and iteratively refines the solution by summing sampled vectors.
This allows flexible control over the number of binary variables and adapts well to hardware constraints.
We prove convergence properties, introduce a step size adaptation scheme, and validate the method on linear regression.
Experiments with simulated and real quantum annealers show that QCQO achieves accurate results with fewer qubits, though convergence slows on noisy hardware.
Our approach enables quantum annealing to address a wider class of continuous problems.
\end{abstract}

\begin{IEEEkeywords}
Quantum Annealing, QUBO, Continuous Optimization, Anytime Algorithms, Linear Regression.
\end{IEEEkeywords}

\section{Introduction}

Quantum Computing (QC) has emerged as a promising technology for accelerating algorithms, with well-known quantum subroutines such as Grover search~\cite{grover.1996a} or Shor's algorithm~\cite{shor.1997a} allowing, in theory, for asymptotic speedups using quantum circuits.
However, technical challenges such as quantum gate noise, decoherence and read-out errors prohibit large-scale application of quantum gate computing at the current point in time~\cite{preskill.2018a}.

Quantum Annealing (QA) is an alternative QC paradigm designed to solve one particular class of optimization problems~\cite{kadowaki.nishimori.1998a}.
Although it is, unlike quantum circuits, not a universal computation model, QA has become a valuable tool in combinatorial optimization, as it can solve various problems from satisfyability~\cite{kochenberger.etal.2005a} over graph problems~\cite{lucas.2014a} to machine learning~\cite{date.etal.2020a}.
Applications range from finance~\cite{hammer.shlifer.1971a} over logistics~\cite{neukart.etal.2017a,chai.etal.2023a} to image processing~\cite{piatkowski.etal.2022a}.
The central optimization problem class that quantum annealers solve is \QUBO.
\begin{definition}[\QUBO]
	Let $\bQ\in\BR^{n\times n}$ for some fixed $n\in\BN$.
	The \emph{Quadratic Unconstrained Binary Optimization} problem is to find a binary vector $\bz^*\in\BB^n$ that minimizes the \emph{energy function} \begin{equation*}
		E_{\bQ}(\bz)=\bz^\T\bQ\bz=\sum_{i=1}^n\sum_{j=1}^nQ_{ij}z_iz_j.
	\end{equation*}
\end{definition}
We typically assume $\bQ$ to be either symmetrical or an upper triangle matrix, such that it defines the energy function uniquely.
Every non-symmetrical \QUBO instance can be made symmetrical by applying $\bQ\gets (\bQ+\bQ^\T)/2$.
\QUBO is generally strongly \textsf{NP}-hard~\cite{cela.punnen.2022a}.
Besides QA, a range of solution or approximation strategies for \QUBO exist, both heuristic and exact~\cite{punnen.etal.2022a}.

A common approach in literature is to cast optimization problems arising in applications into \QUBO form, solve them using a quantum annealer and interpreting the resulting minimizing binary vector as the solution to the original problem.
While this approach has proven very fruitful for a large number of problems~\cite{lucas.2014a}, a strong limitation of \QUBO is that it is fundamentally combinatorial, having a discrete and finite domain.
This makes it hard to apply it naturally to problems with a continuous solution space, such as regression or artificial neural network training, where the variables are real-valued weights.
Nonetheless, \QUBO formulations of both of these problems have been devised, using rough approximations.
For instance, Date et al. \cite{date.etal.2021a} uses a \emph{precision vector} $\bp\in\BR^p$ to interpret a binary vector $\bz\in\BB^{np}$ as digits of a base, such that the real-valued solution is given by $\bw=\bP\bz\in\BR^n$ with $\bP=\bm{1}_n\otimes\bp^\T$, where $\bm{1}_n$ denotes the $n\times n$ identity matrix, and $\otimes$ is the Kronecker product.
Naturally, the elements of $\bw$ are still discrete, taking at most $2^p$ distinct values.
Moreover, the number of binary variables is $\mathcal{O}(np)$, meaning that doubling the precision implies doubling the number of variables, which is a prohibitive factor for contemporary QA hardware.

In this article we present a novel approach to solving optimization problems with continuous domain on quantum annealers.
To this end, we generalize the idea of precision vectors by representing numbers as sums of vectors we repeatedly sample from a random distribution, refining the result with every \QUBO instance we solve.
Our contribution has a number of advantages:
Firstly, we eliminate the necessity to explicitly represent real numbers as binary vectors by representing them implicitly through the \QUBO weights, which are already continuous.
Secondly, the number of vectors, i.e., the number of binary variables to optimize per \QUBO instance, is a hyperparameter that can be chosen freely according to the hardware capacities, making this approach very flexible.
Lastly, the resulting strategy is an anytime algorithm, as each consecutive \QUBO solution is a refinement of the previous one.
This allows for easy adaption to quantum annealers with a limited budget.

This paper is structured as follows:
In \cref{sec:qcqo} we describe our proposed continuous optimization procedure.
After a discussion of its theoretical properties in \cref{sec:theory} we put it into practice in \cref{sec:lr}, where we use our method to perform regression as an exemplary optimization task.
We improve the performance by introducing a step size scheduling scheme in \cref{sec:stepsize}.
Section \cref{sec:conclusion} discusses the implications of possible future developments of this method and concludes this article.

\section{Quadratic Continuous Quantum Optimization}
\label{sec:qcqo}

Given an optimization problem with a quadratic loss function $\cL:\BR^d\rightarrow\BR$ of the form \begin{align}\label{eq:qp}
	\cL(\bw) &= \bw^\T\bA\bw+\ba^\T\bw+c,
\end{align}
we want to find a vector of \emph{weights} $\bw^*\in\BR^d$ that minimizes $\cL$.
Without loss of generality we can assume that $\bA$ is symmetrical (if it is not, we can set $\bA\gets(\bA+\bA^\T)/2$).
To further simplify, we can ignore $c$ for the purpose of minimization.
This problem is equivalent to quadratic programming without any linear constraints~\cite{nocedal.wright.2006a}.
If $\bA$ is indefinite, the problem is \textsf{NP}-hard.

Now assume that $\bw\in\BR^d$ is an initial guess or an intermediate result which we want to refine, e.g., $\bw=\bm{0}$.
Given a matrix $\bR\in\BR^{n\times d}$, we can define a \QUBO instance that finds \begin{equation}
	\bz^*=\underset{\bz\in\BB^n}{\arg\min}~\cL(\bw+\bR^\T\bz),
\end{equation}
where $\bR^\T\bz$ can be interpreted as the sum of a subset of rows of $\bR$.
We find that this \emph{update} to the solution $\bw$ has itself a quadratic form:
\begin{align*}
	\cL(\bw+\bR^\T\bz)\hspace{-5em}\\
	&=(\bw+\bR^\T\bz)^\T\bA(\bw+\bR^\T\bz)+\ba^\T(\bw+\bR^\T\bz)\\
	&= \bw^\T\bA\bw+\bw^\T\bA\bR^\T\bz+\bz^\T\bR\bA\bw+\bz^\T\bR\bA\bR^\T\bz\\
	&\hspace{2em}+\ba^\T\bw+\ba^\T\bR^\T\bz\\
	&= \bz^\T\bR\bA\bR^\T\bz+(2\bw^\T\bA+\ba^\T)\bR^\T\bz + \mathrm{const}.
\end{align*}

As we can see, this is an $n$-variable \QUBO instance with weight matrix \begin{equation}\label{eq:qcqo-qubo}
	\bQ(\bw,\bR)=\bR\bA\bR^\T+\diag[\bR(2\bA\bw+\ba)],
\end{equation}%
where $\diag[\bm{x}]=\bX$ constructs a diagonal matrix such that $X_{ii}=x_i\,\forall i$ and $X_{ij}=0$ if $i\neq j$.

Solving this problem yields an indicator vector $\bz^*\in\BB^n$ that lets us perform a weight update $\bw'\gets\bw+\bR^\T\bz^*$, such that $\cL(\bw')\leq\cL(\bw)$.
The number of variables of the \QUBO problem corresponds to the number of rows $n$ of $\bR$, which is a hyperparameter:
A large $n$ allows for more row sums, and thus a higher probability of finding a weight update that improves the loss value, at the cost of a larger \QUBO size.
By repeating this weight update and sampling $\bR$ from some probability distribution $\cR$ over $\BR^{n\times d}$ every time, we can iteratively refine the solution.
This idea is formalized in \cref{alg:qcqo}.

\begin{algorithm}[H]
	\caption{\AlgNameLong}
	\label{alg:qcqo}
	\begin{algorithmic}[1]
		\renewcommand{\algorithmicrequire}{\textbf{Input:}}
		\renewcommand{\algorithmicensure}{\textbf{Output:}}
		\REQUIRE Quadratic problem defined through $\bA\in\BR^{d\times d}$ and $\ba\in\BR^d$ as in \cref{eq:qp}; probability distribution $\cR$ over $\BR^{n\times d}$; initial guess $\bw_0\in\BR^d$; budget or stopping criterion
		\ENSURE Improved solution $\bw^\circ$
		\STATE $t\gets 0$
		\WHILE {stopping criterion unfulfilled or budget remains}
			\STATE Sample $\bR_t\sim\cR$
			\STATE Compute $\bQ(\bw_t,\bR_t)$ according to \cref{eq:qcqo-qubo}
			\STATE $\bz^*_t\gets\arg\min_{\bz}E_{\bQ}(\bz)$ \COMMENT{use quantum annealer}
			\STATE $\bw_{t+1}\gets\bw_t+\bR_t^\T\bz^*_t$
			\STATE $t\gets t+1$
		\ENDWHILE
		\RETURN $\bw_t$
	\end{algorithmic}
\end{algorithm}

\subsection{Sampling the Matrix $\bR$}

The choice of distribution $\cR$ is of central importance for the behavior of \AlgName, as it controls the distribution of possible update steps $\bR^\T\bz$.

While it is in principle possible to use an arbitrary matrix-valued distribution, we can sample the rows $\bR_1,\dots,\bR_n$ independently from some multivariate distribution, or even the individual elements $R_{ij}$ from some univariate distribution.
If we sample row-wise and i.i.d. from a multivariate normal distribution with mean $\bm\mu$ and covariance matrix $\bm\Sigma$ (denoted by $\cN(\bmu,\bSigma)$), it is easy to see that every point in $\bR^d$ can be reached with a single update step, as follows from the fact that $\cN(\bmu,\bSigma)$ has full support on $\BR^d$.
However, we can show that the distribution of the update step $\bU=\bR^\T\bz$ is closely related to the distribution of each row.

\begin{theorem}\label{thm:normal}
	For an arbitrary but fixed $n\in\BN$, if $\bR_i$ is sampled i.i.e. from $\cN(\bmu,\bSigma)$ for each row $i\in\range{n}$, the expected update step $\bU$ in each iteration of \cref{alg:qcqo} follows the distribution $\cN(\nicefrac{n}{2}\bmu,\nicefrac{n}{4}\bSigma)$.
\end{theorem}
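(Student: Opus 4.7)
The plan is first to pin down what ``expected update step'' means and then to reduce the claim to standard Gaussian algebra. Since the QUBO indicator $\bz$ has not yet been optimized (and a priori any of the $2^n$ patterns is admissible), the natural prior for $\bz$ is uniform on $\BB^n$, so $\BE[\bz] = \tfrac{1}{2}\bm{1}$. Conditional on $\bR$, the expected update is then
\begin{equation*}
	\bU \;=\; \BE_{\bz}[\bR^\T\bz \mid \bR] \;=\; \bR^\T\,\BE[\bz] \;=\; \tfrac{1}{2}\sum_{i=1}^n \bR_i,
\end{equation*}
and I would take this to be the object appearing in the statement.

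Next I would invoke closure of the multivariate normal under affine combinations of independent vectors. Since each $\bR_i$ is i.i.d.\ $\cN(\bmu,\bSigma)$, the sum $\sum_{i=1}^n \bR_i$ is $\cN(n\bmu,\,n\bSigma)$. Scaling by $1/2$ multiplies the mean by $1/2$ and the covariance by $1/4$, giving $\cN(\tfrac{n}{2}\bmu,\,\tfrac{n}{4}\bSigma)$, which is exactly the claim. The derivation thus collapses to two lines once the interpretation is in place.

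The main obstacle is conceptual rather than computational: if one instead treated $\bz$ as a random binary vector and studied the full unconditional distribution of $\bR^\T\bz$, then each term $z_i\bR_i$ would contribute covariance $\tfrac{1}{2}\bSigma + \tfrac{1}{4}\bmu\bmu^\T$, so the total covariance would become $\tfrac{n}{2}\bSigma + \tfrac{n}{4}\bmu\bmu^\T$ and the resulting law would be a mixture over the $2^n$ support patterns of $\bz$, not a Gaussian. The clean Gaussian in the statement therefore depends on reading $\bU$ as the posterior mean $\BE_{\bz}[\bR^\T\bz\mid\bR]$, i.e.\ averaging out $\bz$ \emph{before} examining the randomness in $\bR$. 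Once this reading is articulated, no further difficulty remains and the proof is essentially a one-line application of the stability of Gaussians under linear maps.
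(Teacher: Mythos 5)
Your proposal is correct and follows essentially the same route as the paper: both interpret $\bU$ as the average of $\bR^\T\bz$ over a uniform $\bz\in\BB^n$, which gives $\bU=\tfrac{1}{2}\sum_{i=1}^n\bR_i$, and then apply closure of Gaussians under linear maps to obtain $\cN(\tfrac{n}{2}\bmu,\tfrac{n}{4}\bSigma)$. The paper merely phrases the linear map via a Kronecker-product vectorization of $\bR$, whereas you write the sum of rows directly; your remark on why the unconditional law of $\bR^\T\bz$ would \emph{not} be Gaussian correctly pinpoints the interpretation the theorem relies on.
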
%
\begin{proof}
Consider the random variable $\bP=\vect(\bR)$, which follows a multivariate normal distribution $\cN(\bm{1}_n\otimes\bmu,\bm{I}_n\otimes\bSigma)$, where $\bm{1}_n=[1,1,\dots,1]^\T\in\BR^n$, $\bm{I}_n$ is the $n\times n$ identity matrix, and $\otimes$ denotes the Kronecker product.
Assuming a uniform distribution of binary vectors $\bz\in\BB^n$, the update step is itself a random variable $\bU$ defined as \begin{align*}
	\bU &= \frac{1}{2^n}\sum_{\bz} \underbrace{(\bz^\T\otimes\bm{I}_d)\bP}_{\equiv \,\bR^\T\bz} \\
	&= \biggl(\biggl(\frac{1}{2^n}\sum_{\bz}\bz^\T\biggr)\otimes\bm{I}_d\biggr)\bP \\
	&= \underbrace{\nicefrac{1}{2}(\bm{1}^{\T}_n\otimes\bm{I}_d)}_{=:\bar{\bZ}}\bP.
\end{align*}
As this is a linear transformation on $\bP$, we can exploit the fact that $\bX\sim\cN(\bmu,\bSigma)\Rightarrow \bA\bX+\bm{b}\sim\cN(\bA\bmu+\bm{b},\bA\bSigma\bA^\T)$ and find that $\bU\sim\cN(\bmu_{\cU},\bSigma_{\cU})$ with \begin{align*}
	\bmu_{\cU}&=\bar{\bZ}(\bm{1}_n\otimes\bmu)\\
	&=\nicefrac{1}{2}(\bm{1}^{\T}_n\otimes\bm{I}_d)(\bm{1}_n\otimes\bmu)\\
	&=\nicefrac{1}{2}(\bm{1}^{\T}_n\bm{1}_n)\otimes(\bm{I}_d\bmu)\\
	&=\nicefrac{n}{2}\bmu,\\[\jot]
	\bSigma_{\cU}&=\bar{\bZ}(\bm{I}_n\otimes\bSigma)\bar{\bZ}^\T\\
	&=\nicefrac{1}{4}(\bm{1}^\T_n\otimes\bm{I}_d)(\bm{I}_n\otimes\bSigma)(\bm{1}_n\otimes\bm{I}_d)\\
	&=\nicefrac{1}{4}(\bm{1}^\T_n\bm{I}_n\bm{1}_n)\otimes(\bm{I}_d\bSigma\bm{I}_d)\\
	&=\nicefrac{n}{4}\bSigma.
\end{align*}
\end{proof}
This implies that the update steps follow themselves a multivariate normal distribution across all iterations.

\section{Theoretical Analysis}
\label{sec:theory}
We analyze the case when $\cL(\bw)=\bw^\T\bA\bw+\ba^\T\bw$ is convex. 

\begin{theorem}\label{thm:normal2}
	The sequence of iterates $\bw_t$ generated by \cref{alg:qcqo} satisfies
	\begin{align*}
	\cL(\bw_{t+1})-\cL(\bw^*) &\leq\frac{\norm{\bA}_2\norm{\bw_0-\bw^*}_2}{2t}+\frac{1}{t} \sum_{t'=1}^t\beps_t^\T\bR_t^\T\bz^*_t.
	\end{align*}
\end{theorem}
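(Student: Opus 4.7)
The plan is to adapt the classical $\mathcal{O}(1/t)$ convergence proof for gradient descent on a smooth convex function to the QCQO iteration, treating the update $\bu_t:=\bR_t^\T\bz_t^*$ as a perturbed gradient step whose deviation from an ideal gradient step is collected in $\beps_t$. The gradient of $\cL(\bw)=\bw^\T\bA\bw+\ba^\T\bw$ is $\nabla\cL(\bw)=2\bA\bw+\ba$ with Lipschitz constant $L=2\|\bA\|_2$, and the exact quadratic identity $\cL(\bw+\bu)=\cL(\bw)+\nabla\cL(\bw)^\T\bu+\bu^\T\bA\bu$ is available throughout, with $\bu^\T\bA\bu\leq\|\bA\|_2\|\bu\|_2^2$ by convexity ($\bA\succeq\bm{0}$).

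First, I would derive a one-step contraction on the distance to the optimum. Using the optimality of $\bz_t^*$ for the QUBO with matrix $\bQ(\bw_t,\bR_t)$ from \cref{eq:qcqo-qubo}, pick a reference $\bz'\in\BB^n$ whose image $\bR_t^\T\bz'$ best approximates the ideal gradient step $-\tfrac{1}{L}\nabla\cL(\bw_t)$, and define $\beps_t$ as the corresponding residual. Expanding $\|\bw_{t+1}-\bw^*\|_2^2=\|\bw_t-\bw^*+\bR_t^\T\bz_t^*\|_2^2$, substituting convexity $\cL(\bw_t)-\cL(\bw^*)\leq\nabla\cL(\bw_t)^\T(\bw_t-\bw^*)$ together with the $L$-smoothness co-coercivity $\nabla\cL(\bw_t)^\T(\bw_t-\bw^*)\geq\cL(\bw_t)-\cL(\bw^*)+\tfrac{1}{2L}\|\nabla\cL(\bw_t)\|_2^2$, and sorting the cross terms into the promised linear form should yield
\[
\|\bw_{t+1}-\bw^*\|_2^2\leq\|\bw_t-\bw^*\|_2^2-\tfrac{2}{L}\bigl(\cL(\bw_t)-\cL(\bw^*)\bigr)+2\,\beps_t^\T\bR_t^\T\bz_t^*.
\]
Telescoping from $t'=0$ to $t-1$ collapses the distance chain to $\|\bw_0-\bw^*\|_2^2$ and leaves $\sum_{t'=0}^{t-1}(\cL(\bw_{t'})-\cL(\bw^*))\leq\tfrac{L}{2}\|\bw_0-\bw^*\|_2^2+\sum_{t'=1}^{t}\beps_{t'}^\T\bR_{t'}^\T\bz_{t'}^*$. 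The iterates are monotone in $\cL$ because the QUBO optimum always beats $\bz=\bm{0}$, so the left-hand sum is bounded below by $t\,(\cL(\bw_{t+1})-\cL(\bw^*))$; dividing by $t$ and substituting $L=2\|\bA\|_2$ delivers the stated bound (with $\|\bw_0-\bw^*\|_2^2$ in the numerator rather than $\|\bw_0-\bw^*\|_2$, which I suspect is a typo in the statement).

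The hard part is pinning down $\beps_t$ so that three things happen simultaneously: (i) it encodes the discrepancy between the realized discrete update $\bR_t^\T\bz_t^*$ and a continuous reference step, (ii) after algebraic simplification the only surviving perturbation is the linear scalar $\beps_t^\T\bR_t^\T\bz_t^*$ promised in the theorem, so every $\|\beps_t\|_2^2$ and cross contribution must either cancel or be absorbed into this single form, and (iii) the leading constant matches exactly. Since $\bz$ is confined to $\BB^n$ rather than $[0,1]^n$, no exact gradient-step reference is available, and the bookkeeping must carefully absorb the discretization residual without producing an uncontrolled positive remainder; either a convex-hull relaxation of the QUBO comparison $\cL(\bw_{t+1})\leq\cL(\bw_t+\bR_t^\T\bz')$ or a direct definition $\beps_t:=\bu_t+\tfrac{1}{L}\nabla\cL(\bw_t)$ followed by a term-by-term verification seems the most promising route.
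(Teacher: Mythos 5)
Your plan is essentially the paper's own proof: both arguments treat $\bu_t=\bR_t^\T\bz^*_t$ as a perturbed gradient step, invoke the descent lemma for the $L$-smooth convex quadratic, define $\beps_t$ as the residual between the realized update and an ideal gradient step, telescope the resulting per-iteration inequality, and convert the average into a last-iterate bound via the monotonicity of $\cL(\bw_t)$ (which holds because $\bz=\bm{0}$ is always feasible for the \QUBO). Two remarks on where you and the paper diverge in the details. First, you correctly take $\nabla\cL(\bw)=2\bA\bw+\ba$ and $L=2\norm{\bA}_2$, whereas the paper computes $\norm{\nabla\cL(\bw)-\nabla\cL(\bw')}_2=\norm{\bA(\bw-\bw')}_2$ (dropping the factor $2$) and then defines the residual via $\bR_t^\T\bz^*_t=-\norm{\bA}_2(\bA\bw_t+\ba)+\beps_t$, i.e., with the Lipschitz constant \emph{multiplying} rather than dividing the gradient; your choice $\beps_t=\bu_t+\tfrac{1}{L}\nabla\cL(\bw_t)$ is the standard and more defensible one. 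Second, your suspicion that the bound should carry $\norm{\bw_0-\bw^*}_2^2$ rather than $\norm{\bw_0-\bw^*}_2$ is well founded; the squared norm is what the telescoping actually produces, and the unsquared norm in the statement appears to be a typo. The one place where your write-up stops short --- verifying that after substituting the definition of $\beps_t$ every quadratic remainder (the $\norm{\beps_t}_2^2$ and cross terms) either cancels or is absorbed so that only the linear scalar $\beps_t^\T\bR_t^\T\bz^*_t$ survives --- is precisely the step the paper also does not carry out explicitly: it passes from the descent-lemma inequality to the telescoped form in a single unexplained line. So you have not left a gap that the paper fills; you have merely flagged, correctly, the point at which the published argument is itself incomplete. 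If you want a proof that actually closes, you should carry out that term-by-term verification with your definition of $\beps_t$, accepting that the resulting error term may not reduce exactly to $\beps_t^\T\bR_t^\T\bz^*_t$ as stated.
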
%
\begin{proof}
	For any $\bw,\bw'\in\mathbb{R}^d$, we have \begin{equation*}
	\norm{\nabla \cL(\bw)-\nabla \cL(\bw')}_2=\norm{\bA(\bw-\bw')}_2\leq\norm{\bA}_2\,\norm{\bw-\bw'}_2.
	\end{equation*}
	Hence, $\nabla\cL$ is Lipschitz continuous with constant $L=\norm{\bA}_2$. 
	Thus, we have the quadratic upper bound~\cite{nesterov.2004a}
	\begin{equation*}
	\cL(\bb)\leq\cL(\bc)+\nabla\cL(\bc)^\T\bb-\bc+\nicefrac{L}{2}\norm{\bb-\bc}_2^2.
	\end{equation*}
	Setting $\bb=\bw_{t+1}$ and $\bc=\bw_t$, we arrive at
	\begin{equation*}
	\cL(\bw_{t+1})\leq\cL(\bw_t)+(\bA\bw_t+\ba)^\T\bR_t^\T\bz^*_t+\nicefrac{1}{2}\norm{\bA}_2\norm{\bR_t^\T\bz^*_t}_2^2.
	\end{equation*}
	Moreover, \begin{equation*}
	\cL(\bw_{t})\leq\cL(\bw^*)+\nicefrac{1}{2}\norm{\bA}_2 \norm{\bw_t-\bw^*}_2^2,\end{equation*}
	and thus \begin{align*}
	\cL(\bw_{t+1})-\cL(\bw^*)\leq\hspace{-2cm}&\\
		&(\bA\bw_t+\ba)^\T\bR_t^\T\bz^*_t+\nicefrac{1}{2}\norm{\bA}_2\norm{\bR_t^\T\bz^*_t}_2^2\\
		&+\nicefrac{1}{2}\norm{\bA}_2\norm{\bw_t-\bw^*}_2^2.
	\end{align*}
	Now, let $\bR_t^\T\bz^*_t = -\norm{\bA}_2(bA\bw_t+\ba)+\beps_t$	for all $t$.
	We find \begin{align*}
	\cL(\bw_t)\leq\hspace{-5mm}&\\
		&\cL(\bw^*)+\nicefrac{1}{2}\norm{\bA}_2(\norm{\bw_t-\bw^*}_2^2-\norm{\bw_{t+1}-\bw^*}_2^2)\\
		&+\beps_t^\T\bR_t^\T\bz^*_t.
	\end{align*}
	Any function value $\cL(\bw_{t'})$ with $t'<t$ cannot be closer to $\cL(\bw^*)$ than the function value of the $t$-th iterate $\cL(\bw_t)$, and so is their average.
	Finally, we get \begin{align*}
	\cL(\bw_{t+1})-\cL(\bw^*)\hspace{-2cm}&\\
		&\leq\frac{1}{t}\sum_{t'=1}^t \cL(\bw_{t'})-\cL(\bw^*)\\
		&\leq\frac{\norm{\bA}_2\norm{\bw_0-\bw^*}_2}{2t}+\frac{1}{t} \sum_{t'=1}^t\beps_t^\T\bR_t^\T\bz^*_t.
	\end{align*}
\end{proof}
The result implies that \cref{alg:qcqo} can enjoy (at least) the convergence speed of gradient descent (with fixed stepsize) when the distribution $\cR$ is chosen appropriately.
Note, however, that this result is very pessimistic, since gradient descent must perform a step is a neighborhood of $\bw_t$ while \cref{alg:qcqo} may perform arbitrary large jumps.
We postpone the analysis of such cases to future work. 

\subsection{Number of Qubits}
Approaches comparable to \AlgName use precision vectors to represent continuous values~\cite{date.etal.2020a,date.etal.2021a}.
For encoding a value in the unit interval $[0,1]$ using $k$ (qu)bits, we can construct a vector $\bp = (2^{i-1}/2^k-1)^{\T}_{i=1,\dots,k}$ so that $\mathcal{X}_{\bp}=\set{\bp^\T\bz:~\bz\in\BB^k}$ samples the interval evenly in steps of size $2^k-1$, with $\bp^\T\bm{0}=0$ and $\bp^\T\bm{1}=1$.
To represent any number within a neighborhood of size $\epsilon>0$ we need at least $k\geq \ceil{\log_2(1/(2\epsilon)+1)}$, or $\Theta(d\log(1/\epsilon))$ (qu)bits.

In contrast, \AlgName makes the continuous values implicit and instead uses (qu)bits to encode the inclusion/exclusion of fixed summands in a sum.
Consequently our method can encode any arbitrary vector $\bm{x}\in\BR^d$ using just a single bit $z_i$, by simply choosing $\bR_i=\bm{x}$, i.e., we need $\Theta(1)$ binary variables in general.
This allows us to choose the number of (qu)bits flexibly according to hardware limitations, requiring fewer qubits than explicit encodings.

\section{Application: Linear Regression}
\label{sec:lr}

Given a data matrix $\bX\in\BR^{N\times d}$ consisting of $N$ rows of $d$-dimensional data points and a vector $\by\in\BR^N$ of target values, we want to find a weight vector $\bw\in\BR^d$ and a bias term $b\in\BR$ such that $\bX\bw+b$ is an approximation of $\by$.
The approximation quality is quantified by the Mean Squared Error (MSE) defined as \begin{equation}
	\MSE(\bw,b;\bX,\by) = \nicefrac{1}{N}\norm{\bX\bw+b\bm{1}-\by}_2^2.
\end{equation}
For convenience, we fuse the bias term with the weight vector by assuming $w_d=b$ and $X_{i,d}=1~\forall i$.
This lets us write the MSE in matrix-vector form as \begin{align}
	\MSE(\bw;\bX,\by)\hspace{-6em}\\
	&= \nicefrac{1}{N}(\bX\bw-\by)^\T(\bX\bw-\by) \nonumber\\
	&= \nicefrac{1}{N}(\bw^\T\bX^\T\bX\bw-2\by^\T\bX\bw+\by^\T\by)\nonumber\\
	&= \bw^\T(\underbrace{\nicefrac{1}{N}\bX^\T\bX}_{=:\bA})\bw+(\underbrace{\nicefrac{-2}{N}\bX^\T\by}_{=:\ba})^\T\bw+\mathrm{const}.\label{eq:lr}
\end{align}

Applying \cref{eq:qcqo-qubo} by replacing $\bA$ and $\ba$ with above terms we obtain a \QUBO instance with weight matrix \begin{align*}
	\bQ_{\mathrm{LR}}(\bw,\bR;\bX,\by)\hspace{-6em}\\
	&=\nicefrac{1}{N}\left(\bR\bX^\T\bX\bR^\T+2\diag[\bR\bX^\T(\bX\bw-\by)]\right)\!.
\end{align*}

\subsection{Experimental Evaluation}

To test the convergence behavior of \AlgName, we generate a synthetic regression data set with $d=16$ and $N=10^5$ by \begin{enumerate*}[label=(\roman*)]\item sampling $\bw\in\cN(\bm{0}_{d},\bm{I}_{d})$, \item setting $\bw\gets 100\bw/\norm{\bw}_2$, such that the weight vector has a fixed Euclidean distance of $100$ from the origin, \item sampling $\bX\in\BR^{N\times d}$ with $\bX_i\sim\cN(\bm{0}_{d}, d\bm{I}_d)$, \item setting $X_{i,d}=1\,\forall i$, and \item computing $\by=\bX\bw$\end{enumerate*}.
Following this approach we obtain a regression data set with ground-truth weight vector $\bw^*$, which we try to recover using \AlgName.

To this end, we initialize \cref{alg:qcqo} with $\bA$ and $\ba$ according to \cref{eq:lr} and choose $\bm{0}_d$ as our initial guess for every run.
The rows of $\bR$ are sampled i.i.e. from a normal distribution $\cN(\bm{0}_d,\nicefrac{4\sigma}{n}\bm{I}_d)$, which by \cref{thm:normal} results in update steps distributed as $\cN(\bm{0}_d,\sigma\bm{I}_d)$ in expectation over uniform $\bz$.
We choose all combinations of $(n,\sigma)\in\set{8, 16, 24}\times\set{0.1, 1.0}$.
Due to the relatively small values of $n$, we use a brute-force \QUBO solver.

\begin{figure}
	\centering
	\includegraphics[width=\columnwidth]{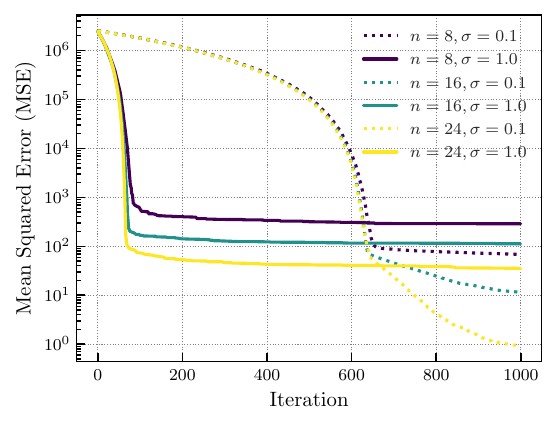}
	\caption{Convergence behavior of \AlgName on Linear Regression tasks with different choices of $n$ (number of rows of update matrix $\bR$) and $\sigma$ (standard deviation used for sampling the rows of $\bR$); MSE is averaged over 10 runs.}
	\label{fig:lr}
\end{figure}

\begin{figure}
	\centering
	\includegraphics[width=\columnwidth]{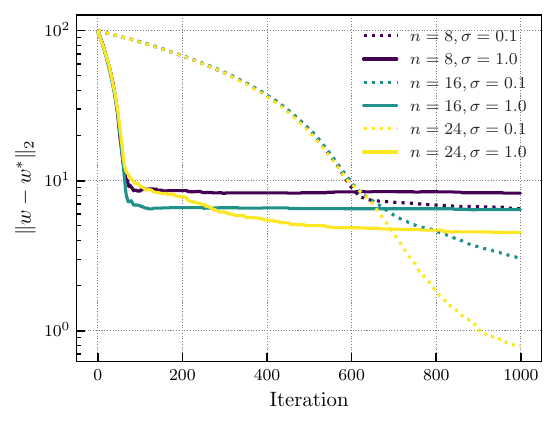}
	\caption{Same experiment as \cref{fig:lr}, but showing the Euclidean distance between current solution $\bw$ of each iteration and the globally optimal solution $\bw^*$, averaged over 10 runs.}
	\label{fig:lr_w}
\end{figure}

The results are shown in \cref{fig:lr}, from which we make two key observations:
Firstly, we observe that convergence is faster with higher $n$.
This is likely due to the larger number of sums that can be formed from the rows of $\bR$ in each iterations.
Increasing $n$ leads to an exponentially larger set of possible steps in $\BR^d$, increasing the probability to find better solutions.

Secondly, a larger $\sigma$ leads to a steeper initial descent followed by a shallower descent after an MSE of around $10^3$.
We assume this change in behavior occurs when the distance between the current solution and the optimal solution becomes smaller than a standard deviation of $\bU$.
This is analogous to the behavior of Gradient Descent around the optimum when the step size is too large and the current solution starts oscillating around the true optimum~\cite{nesterov.2004a}. 
However, the key difference is that \AlgName cannot move to a solution with higher loss value.

The second observation suggests that \AlgName would profit from a gradually decreasing $\sigma$, following some schedule similar to a learning rate schedule in evolutionary strategies or stochastic gradient descent learning~\cite{bottou.etal.2018a,glasmachers.2020a}.

\Cref{fig:lr_w} shows the same experiment as \cref{fig:lr}, but reporting the Euclidean distance between the current solution $\bw_t$ of each iteration and the globally optimal vector $\bw^*$.
The curves are very similar to the MSE, showing a steep approach up to around 80 iterations, after which the distance stagnates, with larger $n$ leading to a smaller distance.
Again, a lower $\sigma$ leads to a slower initial approach, and a much smaller final distance after 1000 iterations.

\section{Step Size Scheduling}
\label{sec:stepsize}

To introduce an adaptive step size, we modify \AlgName slightly by \begin{enumerate*}[label=(\roman*)]\item assuming that the distribution $\cR$ is a row-wise isotropic multivariate normal distribution $\cN(\bm{0}_d, \nicefrac{4\sigma}{n}\bm{I}_d)$, and \item updating $\sigma$ in every iteration based on a window of past update steps, for which \item we introduce a window size hyperparameter $T\in\BN$\end{enumerate*}.

As explained before, a standard deviation of $\nicefrac{4\sigma}{n}$ for the rows of $\bR$ leads to an expected standard deviation of $\sigma$ on the update step $\bU=\BE_{\bz}[\bR^\T\bz]=\nicefrac{1}{2}\bm{1}_n^\T\bR$.
By computing the average size of the past $T$ update steps, we can deduce the approximate step size that should be taken that has lead to the largest improvement over the past $T$ iterations.

\begin{algorithm}[H]
	\caption{\AlgNameLong with window-based step size adaption}
	\label{alg:qcqo_stepsize}
	\begin{algorithmic}[1]
		\renewcommand{\algorithmicrequire}{\textbf{Input:}}
		\renewcommand{\algorithmicensure}{\textbf{Output:}}
		\REQUIRE Quadratic problem defined through $\bA\in\BR^{d\times d}$ and $\ba\in\BR^d$ as in \cref{eq:qp}; number of rows $n\in\BN$ (\QUBO size); window size $T\in\BN$; initial guess $\bw_0\in\BR^d$; budget or stopping criterion
		\ENSURE Improved solution $\bw^\circ$
		\STATE $t\gets 0$
		\WHILE {stopping criterion unfulfilled or budget remains}
		\IF {$t>T$}
		\STATE $\sigma_t\gets \nicefrac{1}{T}\sum_{\tau=1}^T\norm{\bu_{t-\tau}}_2$
		\ELSE
		\STATE $\sigma_t\gets 1$
		\ENDIF
		\STATE Sample every row of $\bR_t$ i.i.d. from $\cN(\bm{0}_d,\sigma_t\bm{I}_d)$
		\STATE Compute $\bQ(\bw_t,\bR_t)$ according to \cref{eq:qcqo-qubo}
		\STATE $\bz^*_t\gets\arg\min_{\bz}E_{\bQ}(\bz)$ \COMMENT{use quantum annealer}
		\STATE $\bu_t\gets\bR_t^\T\bz^*_t$
		\STATE $\bw_{t+1}\gets\bw_t+\bu_t$
		\STATE $t\gets t+1$
		\ENDWHILE
		\RETURN $\bw_t$
	\end{algorithmic}
\end{algorithm}

To this end, let $\bu_{t}$ denote the update step in iteration $t$.
We compute \begin{equation}\label{eq:stepsize}
	h_t = \nicefrac{1}{T}\sum_{\tau=1}^T\norm{\bu_{t-\tau}}_2 = \nicefrac{1}{T}\sum_{\tau=1}^T\sqrt{\sum_{i=1}^du^2_{t-\tau,i}}
\end{equation}%
as an online estimator of the square root of the 2\textsuperscript{nd} moment of the update step, which we use as a new value of sigma, setting $\sigma=h_t$.
See \cref{alg:qcqo_stepsize} for an overview of the procedure.

\begin{figure}
	\centering
	\includegraphics[width=\columnwidth]{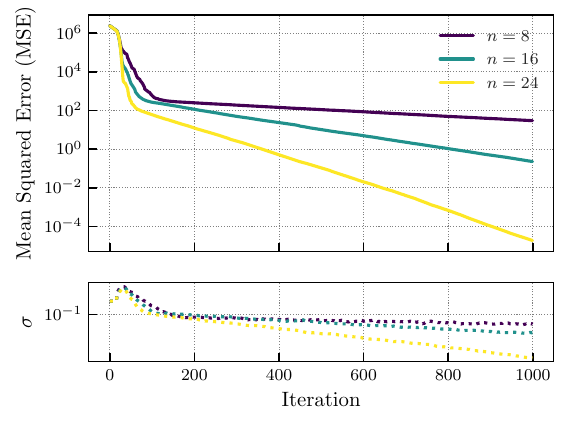}
	\caption{Convergence behavior of \AlgName on Linear Regression tasks with different choices of $n$ (number of rows of update matrix $\bR$) and $\sigma$ (standard deviation used for sampling the rows of $\bR$), using the step size update rule described by \cref{eq:stepsize}; MSE is averaged over 10 runs.}
	\label{fig:lr_stepsize}
\end{figure}

\begin{figure}
	\centering
	\includegraphics[width=\columnwidth]{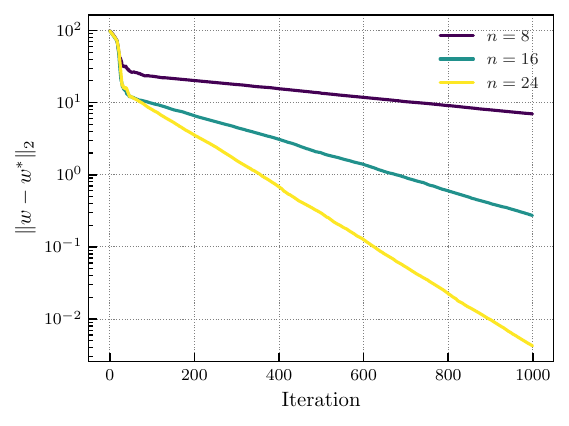}
	\caption{Same experiment as \cref{fig:lr_stepsize}, but showing the Euclidean distance between current solution $\bw_t$ of each iteration and the globally optimal solution $\bw^*$, averaged over 10 runs.}
	\label{fig:lr_w_stepsize}
\end{figure}

The results are shown in \cref{fig:lr_stepsize}, again for multiple $n$ on the same data set as before.
As we expected, initially the value of $\sigma$ increases to around $10$ before decreasing steadily, leading to a much lower MSE value after the budget of 1000 iterations was depleted.
Like before, a larger $n$ is beneficial for faster convergence at the cost of a larger \QUBO search space.

Again, the Euclidean distance between $\bw_t$ in each iteration and the globally optimal $\bw^*$ is shown in \cref{fig:lr_w_stepsize}, and again it is very similar to \cref{fig:lr_stepsize}, leading to a much smaller final distance after 1000 iterations than without step size adaption.

\subsection{Evaluation on Quantum Annealer}

For small $n<30$ we were able to compute the ground-truth optima of the \QUBO instances in \cref{alg:qcqo,alg:qcqo_stepsize} by brute force, but this quickly becomes infeasible for larger $n\gg 30$.
To investigate the performance of \AlgName using imperfect \QUBO solvers, we run it using a D-Wave quantum annealer.
Specifically, we use the Advantage 2 system 1.1, which comprises over 5,000 qubits and up to around 35,000 couplings using the Pegasus topology~\cite{d-wavesystems.2021a}, allowing for dense \QUBO weight matrices of up to $n=180$.
We solve each \QUBO instance with an annealing time of 20 ns and 100 reads (``shots''): the annealing process is repeated 100 times per instance and the lowest-energy solution returned.
Due to persisting challenges of real quantum hardware, this solution may not be optimal~\cite{d-wave.2023a}.
However, \AlgName should still work, as long as the solution returned by the solver does not \emph{increase} the loss value.
This is highly unlikely, as $\bm{0}_n$ is always a valid solution leaving $\bw_t$ unchanged.

\begin{figure}
	\centering
	\includegraphics[width=\columnwidth]{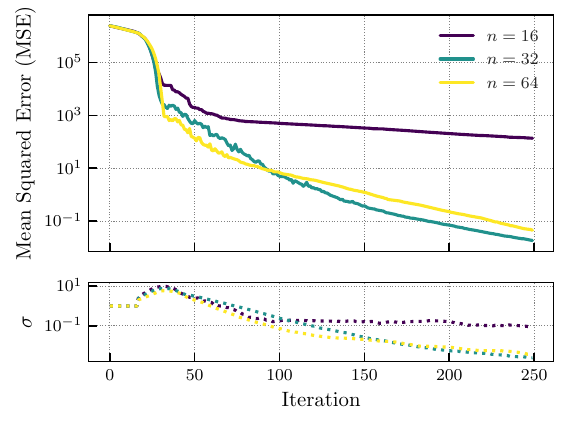}
	\caption{Same experiment as \cref{fig:lr_stepsize}, but run on a D-Wave quantum annealer, averaged over 10 runs.}
	\label{fig:lr_dwave}
\end{figure}

We repeat the same experiment described in \cref{sec:stepsize}, but replace the brute-force \QUBO solver with calls to the D-Wave device.
Additionally, we run the algorithm with $n\in\set{16,32,64}$, as the QA device allows us to solve larger \QUBO instances.
The resulting MSE over time is shown in \cref{fig:lr_dwave}:
We clearly observe a very similar behavior as in \cref{fig:lr_stepsize}, although the decrease in MSE is less steep, leading to a final MSE of only around $10^{-1}$ after 1000 iterations for $n=64$.
Interestingly, the final MSE is better for $n=32$ than for $n=64$, whereas using the brute-force solver a higher $n$ always lead to a lower loss value.
Presumably, this is due to imperfect solutions returned by the QA device, leading to higher-quality solutions for $n=32$ than for $n=64$, which seems to be more important for steady convergence than the additional update step directions introduced by more rows of $\bR$.

\section{Conclusion}
\label{sec:conclusion}

In this work we have described the \AlgNameLong (\AlgName) algorithm, approximating solutions of unconstrained quadratic programming problems using a series of \QUBO problems, which can be solved by QA.
We demonstrated that the algorithm works and finds good solutions to an LR problem.
In addition, we devised a step size adaption scheme that demonstrably improves convergence speed and solution quality.
Lastly, we showed that \AlgName also works on real quantum hardware by running it on a D-Wave Advantage 2, though the reduced solution quality leads to slower convergence.

In the scope of this work, we only used LR as an exemplary optimization problem, which has a closed-form solution, however, the class of problems that can be approached by \AlgName is much larger and comprises harder problems.
In particular, it can be applied naturally to all unconstrained quadratic programming problems with indefinite parameter matrix $\bA$, which arise in diverse settings, including relaxations of combinatorial problems such as Max-Cut and \QUBO, as subproblems in robust and nonconvex optimization~\cite{burkhard.etal.1998a}, or as intermediate formulations in finance and signal processing; our method thus offers potential utility as a heuristic in these domains.
In future work, we will explore the efficiency of our approach on other problems, particularly such with non-convex loss functions.

So far, the matrix $\bR$ is always sampled from some distribution.
While this approach has desirable theoretical properties, the question arises if some fixed choice of $\bR$ leads to the best possible coverage of the hyperball around $\bw_t$ across all $\bz$, or if we can adapt $\bR$ (or its distribution parameters) based on the previous update steps, similarly to covariance matrix adaption~\cite{hansen.etal.2003a}.

Our step size adaption scheme (\cref{sec:stepsize}), while effective in our experiments, is largely heuristic.
An in-depth analysis of different schemes and how they impact the convergence properties and efficiency of \AlgName is left for future work.



\begin{thebibliography}{10}
	\providecommand{\url}[1]{#1}
	\csname url@samestyle\endcsname
	\providecommand{\newblock}{\relax}
	\providecommand{\bibinfo}[2]{#2}
	\providecommand{\BIBentrySTDinterwordspacing}{\spaceskip=0pt\relax}
	\providecommand{\BIBentryALTinterwordstretchfactor}{4}
	\providecommand{\BIBentryALTinterwordspacing}{\spaceskip=\fontdimen2\font plus
		\BIBentryALTinterwordstretchfactor\fontdimen3\font minus
		\fontdimen4\font\relax}
	\providecommand{\BIBforeignlanguage}[2]{{%
			\expandafter\ifx\csname l@#1\endcsname\relax
			\typeout{** WARNING: IEEEtran.bst: No hyphenation pattern has been}%
			\typeout{** loaded for the language `#1'. Using the pattern for}%
			\typeout{** the default language instead.}%
			\else
			\language=\csname l@#1\endcsname
			\fi
			#2}}
	\providecommand{\BIBdecl}{\relax}
	\BIBdecl
	
	\bibitem{grover.1996a}
	L.~K. Grover, ``A fast quantum mechanical algorithm for database search,'' in
	\emph{Proceedings of the 28th Annual {{ACM}} Symposium on the Theory of
		Computing}, 1996, pp. 212--219.
	
	\bibitem{shor.1997a}
	P.~W. Shor, ``Polynomial-time algorithms for prime factorization and discrete
	logarithms on a quantum computer,'' \emph{SIAM Journal on Computing},
	vol.~26, no.~5, pp. 1484--1509, 1997.
	
	\bibitem{preskill.2018a}
	J.~Preskill, ``Quantum {{Computing}} in the {{NISQ}} era and beyond,''
	\emph{Quantum}, vol.~2, p.~79, 2018.
	
	\bibitem{kadowaki.nishimori.1998a}
	T.~Kadowaki and H.~Nishimori, ``Quantum annealing in the transverse {{Ising}}
	model,'' \emph{Physical Review E}, vol.~58, no.~5, pp. 53--55, 1998.
	
	\bibitem{kochenberger.etal.2005a}
	G.~Kochenberger, F.~Glover, B.~Alidaee, and K.~Lewis, ``Using the unconstrained
	quadratic program to model and solve {{Max}} 2-{{SAT}} problems,'' \emph{Int.
		Journal of Operational Research}, vol.~1, no. 1-2, pp. 89--100, 2005.
	
	\bibitem{lucas.2014a}
	A.~Lucas, ``Ising formulations of many {{NP}} problems,'' \emph{Frontiers in
		Physics}, vol.~2, 2014.
	
	\bibitem{date.etal.2020a}
	P.~Date, D.~Arthur, and L.~{Pusey-Nazzaro}, ``{{QUBO Formulations}} for
	{{Training Machine Learning Models}},'' \emph{arXiv:2008.02369 [physics,
		stat]}, 2020.
	
	\bibitem{hammer.shlifer.1971a}
	P.~L. Hammer and E.~Shlifer, ``Applications of pseudo-{{Boolean}} methods to
	economic problems,'' \emph{Theory and Decision. An International Journal for
		Multidisciplinary Advances in Decision Science}, vol.~1, no.~3, pp. 296--308,
	1971.
	
	\bibitem{neukart.etal.2017a}
	F.~Neukart, G.~Compostella, C.~Seidel, D.~{von Dollen}, S.~Yarkoni, and
	B.~Parney, ``Traffic {{Flow Optimization Using}} a {{Quantum Annealer}},''
	\emph{Frontiers in ICT}, vol.~4, 2017.
	
	\bibitem{chai.etal.2023a}
	Y.~Chai, L.~Funcke, T.~Hartung, K.~Jansen, S.~K{\"u}hn, P.~Stornati, and
	T.~Stollenwerk, ``Optimal flight-gate assignment on a digital quantum
	computer,'' \emph{Physical Review Applied}, vol.~20, no.~6, p. 064025, 2023.
	
	\bibitem{piatkowski.etal.2022a}
	N.~Piatkowski, T.~Gerlach, R.~Hugues, R.~Sifa, C.~Bauckhage, and F.~Barbaresco,
	``Towards bundle adjustment for satellite imaging via quantum machine
	learning,'' in \emph{2022 25th International Conference on Information Fusion
		(FUSION)}, 2022, pp. 1--8.
	
	\bibitem{cela.punnen.2022a}
	E.~{\c C}ela and A.~P. Punnen, ``Complexity and polynomially solvable special
	cases of {{QUBO}},'' in \emph{The Quadratic Unconstrained Binary Optimization
		Problem: {{Theory}}, Algorithms, and Applications}, 2022, pp. 57--95.
	
	\bibitem{punnen.etal.2022a}
	A.~P. Punnen, F.~Glover, G.~Kochenberger, Y.~Du, E.~{\c C}ela, A.~N. Letchford,
	E.~Boros, R.~Sotirov, K.~Natarajan, Y.~Wang, J.-K. Hao, and B.~D. Woods,
	\emph{The {{Quadratic Unconstrained Binary Optimization Problem}}:
		{{Theory}}, {{Algorithms}}, and {{Applications}}}, 2022.
	
	\bibitem{date.etal.2021a}
	P.~Date and T.~Potok, ``Adiabatic quantum linear regression,'' \emph{Scientific
		Reports}, vol.~11, no.~1, 2021.
	
	\bibitem{nocedal.wright.2006a}
	J.~Nocedal and S.~J. Wright, ``Quadratic programming,'' \emph{Numerical
		optimization}, pp. 448--492, 2006.
	
	\bibitem{nesterov.2004a}
	Y.~Nesterov, \emph{Introductory Lectures on Convex Optimization: A Basic
		Course}, 1st~ed., ser. Applied Optimization.\hskip 1em plus 0.5em minus
	0.4em\relax Springer New York, 2004.
	
	\bibitem{bottou.etal.2018a}
	L.~Bottou, F.~E. Curtis, and J.~Nocedal, ``Optimization methods for large-scale
	machine learning,'' \emph{{SIAM} Rev.}, vol.~60, no.~2, pp. 223--311, 2018.
	
	\bibitem{glasmachers.2020a}
	T.~Glasmachers, ``Global convergence of the {(1} + 1) evolution strategy to a
	critical point,'' \emph{Evol. Comput.}, vol.~28, no.~1, pp. 27--53, 2020.
	
	\bibitem{d-wavesystems.2021a}
	{D-Wave Systems}, \emph{Technical {{Description}} of the {{D-Wave Quantum
				Processing Unit}}}, 2021.
	
	\bibitem{d-wave.2023a}
	------, ``Error sources for problem representation,''
	https://docs.dwavesys.com/docs/latest/c\_qpu\_ice.html, 2024.
	
	\bibitem{burkhard.etal.1998a}
	R.~E. Burkard, E.~{\c{C}}ela, P.~M. Pardalos, and L.~S. Pitsoulis, \emph{The
		Quadratic Assignment Problem}.\hskip 1em plus 0.5em minus 0.4em\relax
	Springer US, 1998, pp. 1713--1809.
	
	\bibitem{hansen.etal.2003a}
	N.~Hansen, S.~D. M{\"u}ller, and P.~Koumoutsakos, ``Reducing the time
	complexity of the derandomized evolution strategy with covariance matrix
	adaptation (cma-es),'' \emph{Evolutionary computation}, vol.~11, no.~1, pp.
	1--18, 2003.
	
\end{thebibliography}
\end{document}